\newtheorem{thm}{Theorem}
\newtheorem{prop}[thm]{Proposition}
\begin{document}

\title{Secure Computation of Top-K Eigenvectors for Shared Matrices in the Cloud}

\author{\IEEEauthorblockN{James Powers, Keke Chen}
\IEEEauthorblockA{Data Intensive Analysis and Computing Lab (DIAC)\\
Ohio Center of Excellence in Knowledge Enabled Computing (Kno.e.sis)\\
Department of Computer Science and Engineering, Wright State University\\
\{powers.4,keke.chen\}@wright.edu}
}

\maketitle

\begin{abstract}
With the development of sensor network, mobile computing, and web applications, data are now collected from many distributed sources to form big datasets. Such datasets can be hosted in the cloud to achieve economical processing. However, these data might be highly sensitive requiring secure storage and processing. We envision a cloud-based data storage and processing framework that enables users to economically and securely share and handle big datasets. Under this framework, we study the matrix-based data mining algorithms with a focus on the secure top-k eigenvector algorithm. Our approach uses an iterative processing model in which the authorized user interacts with the cloud to achieve the result. In this process, both the source matrix and the intermediate results keep confidential and the client-side incurs low costs. The security of this approach is guaranteed by using Paillier Encryption and a random perturbation technique. We carefully analyze its security under a cloud-specific threat model. Our experimental results show that the proposed method is scalable to big matrices while requiring low client-side costs.
\end{abstract}

\begin{IEEEkeywords}
cloud computing; big matrix; power iteration; MapReduce; security; performance;
\end{IEEEkeywords}

%
\IEEEpeerreviewmaketitle

\section{Introduction}
With the development and wide deployment of web services, mobile applications, and sensor networks, data are now collected from many distributed sources to form big datasets. For example, users of mobile devices are becoming ``citizen sensors'' \cite{goodchild07} to generate information about the world via tools like Twitter and location-based services. Such datasets have become a valuable asset to the data owner. 
This paradigm raises a number of challenges for data storage, sharing, and analysis.
\begin{itemize}
\item Most data collectors, such as mobile devices and sensors, have very limited resources to store and process data. Thus, the data have to be sent to servers or clouds. 
\item The collected data may be highly sensitive. So it becomes necessary to transfer, store, and process them securely. 
\item Users might be authorized by the data owner to use the data. Processing and analyzing the secured big data will require a huge amount of computing resources to be allocated on demand, where cloud computing is the ideal platform.
\end{itemize}

With these problems in mind, we envision a cloud-based data storage and processing framework that enables users to economically and securely share and handle big matrices collected from distributed sources. The data owners and data consumers do not need to own powerful computational infrastructures - they need only a PC to interact with the cloud infrastructure to economically finish the data intensive computing. 

The key problem of this framework is the security of data. The data owner loses the control of the data once the data are exported to the cloud. For many reasons, such as a compromised cloud infrastructure or insider attacks conducted by the provider's employees, the cloud provider may not be considered a trusted party. Storing data securely is trivial, but processing data securely in the cloud is very challenging because the underlying infrastructure is owned by the untrusted cloud provider. Our framework aims to enable the trusted parties to securely conduct computation on encrypted data on top of untrusted cloud infrastructures. 

The most relevant approach is fully homomorphic encryption \cite{gentry09,naehrig11}, which, however, is too expensive to be practical. As users expect to use the cloud to minimize the client-side costs, in general, approaches, which demand high costs on communication and the client side, cannot be moved to the cloud setting (Section \ref{sec:related} for more details).   

\textbf{Scope and Contributions.} Securely sharing, analyzing, and mining datasets in the cloud is a rather broad issue. In this paper, we study a specific problem with the proposed framework: matrix-based data mining, and more specifically, we focus on the secure and efficient iterative methods for finding the approximate top-k eigenvectors of a secured large matrix in the cloud. 

Eigendecomposition \cite{saad11} has broad applications in many important areas including information retrieval \cite{deerwester90} and PageRank \cite{brin98} for ranking web search results. The common algorithm to find all eigenvectors will cost $O(n^3)$ in time complexity, which is prohibitively expensive for large dimensionality $n$. When the matrix is large, the iterative methods (i.e., power-iteration based methods \cite{arnoldi51,cullum85}) are used to find the approximate top-k eigenvectors instead. The most expensive step of the iterative methods is the matrix-vector multiplication.    


In the proposed framework, we use the \emph{partially homomorphic} Paillier encryption system \cite{paillier99} to encrypt the values which enables the processing of encrypted data in the cloud. The data owner manages the keys and owns the data in the cloud. At the data owner's request, the data collectors encrypt vectors (e.g., describing the interactions in social network or preferences over books) and submit them to the cloud to form the big matrix. The data owner or the authorized data consumer, who has only limited computing resources, then interacts with the cloud to conduct an iterative computation to find the approximate top-k eigenvectors. 

The core component of this framework is the secure matrix-vector multiplication between the encrypted matrix stored in the cloud and the vector provided by the client. Paillier encryption is used to enable computations over encrypted data in the cloud, which is much more efficient than the existing fully homomorphic encryption methods \cite{gentry09}. However, it can only provide homomorphic addition, handicapped for homomorphic matrix computation. We develop an efficient perturbation-based protocol to overcome this problem and ensure no information is leaked in the computation. It also guarantees low costs  in the trusted client side (the data owner, data collectors, and the authorized data consumers) to fully take advantage of cloud computing. We also develop a MapReduce program to process the secure matrix-vector multiplication to fully exploit the parallelism and scalability enabled by the cloud platform.


An extensive evaluation has been conducted on the proposed method. The result shows that the cloud-side parallel processing on encrypted data is efficient and scalable, and the client-side costs are quite acceptable. 

The remaining part of the paper is organized as follows. Section \ref{sec:pre} gives background knowledge about the proposed approach, including a brief description on the Paillier encryption system, the power iteration methods, and the MapReduce programming model. Section \ref{sec:approach} describes the design of the approach and also analyzes its correctness, security, and costs. Section \ref{sec:exp} presents the results of experimental evaluation with a focus on the storage and computing costs in both the cloud and client sides. Section \ref{sec:related} shows some related work on secure computation in the cloud.

\section{Preliminaries} \label{sec:pre}

In the following, we will briefly describe our notation and background knowledge about eigenvalue decomposition, Paillier encryption, and MapReduce programming. 

\textbf{Notation}
For clear presentation, we will use Greek characters to represent scalars, lower case letters for vectors, indexed lower case letters for the elements in the vector, and capital letters for matrices or submatrices. In particular, $\mathbb{Z}_q^n$ represents the group of $n$-dimensional vectors of integers with moduluo $q$.
We use $\{b_i\}, i=1..k$ to denote a set of vectors (or values). 


\textbf{Power Iteration for Eigenvalue Decomposition.} Finding eigenvalues and eigenvectors of a square matrix has many important applications in science and engineering domains. In particular, eigenvalue decomposition has been an important tool in data analysis. For example, Principal Component Analysis (PCA) depends on eigenvalue decomposition \cite{jolliffe86}. In information retrieval, it is used to find the relationship between words and documents on a text corpus \cite{deerwester90}. Google's patented PageRank technique\cite{brin98} is also related to eigenvalue decomposition. 

The matrices from these applications are normally very large. Thus, the direct computation method that costs $O(n^3)$ is not a viable option. Instead, for large matrices, power iteration based methods, such as the Arnoldi \cite{arnoldi51} and Lanczos methods \cite{cullum85}, are used for finding the top-k eigenvectors. 
Assume $A$, $A\in \mathbb{R}^{n\times n}$, is a $n\times n$ real matrix and $b_0$ is a random $n$-dimensional vector. We sketch these methods in Algorithm \ref{alg:power-it}.
\begin{algorithm}[htb]
\caption{Framework of Power Iteration Methods}\label{alg:power-it}
\begin{algorithmic}[1]
\STATE $b_0 \leftarrow$ random $n$-D vector;  
\FOR{i $\leftarrow$ 1 to $k$}
\STATE $b_i\leftarrow Ab_{i-1}/|| Ab_{i-1}||$;
\STATE other operations of cost $O(n)$ specific to the Arnoldi or Lanczos methods.  
\ENDFOR
\STATE Post-processing with a cost $O(n)$ to generate eigenvectors. 
\end{algorithmic}
\end{algorithm}

Note that in this iterative framework, the most expensive operation is the line $b_i\leftarrow Ab_{i-1}/|| Ab_{i-1}||$. Other operations are only related to processing $b_i$ and some auxiliary $k\times k$ matrix. Usually, only a few eigenvectors are needed and thus $k$ is typically small (e.g., k=10). The remaining computations, specific to the Arnoldi and Lanczos methods cost only $O(kn)$. With the matrix stored in the cloud, we can let the cloud take care of the most expensive part while the client handles the remaining low-cost steps.

\textbf{Paillier Encryption.} 
Fully homomorphic encryption aims to allow addition and multiplication to be conducted on encrypted values without decryption. Paillier Encryption is a partially homomorphic encryption scheme that is much more efficient than the fully homomorphic ones \cite{gentry09}, but only preserves homomorphic addition, i.e. 
\begin{equation}
E(x) + E(y) = E(x+y).
\end{equation}
Multiplication cannot be implemented on top of $E(x)$ and $E(y)$ in Paillier encryption. However, with one operand not encrypted, say $y$, multiplication can still be implemented as
\begin{equation}
E(xy) = E(x) ~\text{mod\_power} ~ y,
\end{equation}
where \emph{mod\_power} means the modulo power operation \cite{katz07}. For simplicity of presentation, we use $(E(x))^y$ to represent \\$E(x)\ \text{mod\_power}\ y$. It has been proven that Paillier Encryption has strong security guarantee, satisfying the definition of semantic security.

Therefore, to implement more complicated operations with Paillier encryption, one has to expose one of the operands which becomes the security hole. Finding a way to limit the exposure and maintain data privacy will be one of the challenging tasks in the application.


\section{Secure Top-k Eigenvector Computation in the Cloud}\label{sec:approach}
In this section, we will describe the major components in our approach. We begin with the general computational framework and describe the roles of the client and cloud components. Next, we present the security/threat model and discuss our assumptions and potential attack points. We follow with the secure power iteration algorithm that is built up with Paillier encryption and random perturbation. Then, we briefly discuss the cloud-side MapReduce algorithm. Finally, we formally analyze the security and cost of the proposed approach. We show that our approach can minimize the cost of the client-side computation while providing excellent performance for processing big matrices with strong security guarantee.

\subsection{Computational Framework}

Our framework involves four parties: cloud, data owner, data collectors, and authorized data users. It reflects the highly distributed nature in data intensive computing where collecting data, storing data, and processing data might be handled by different distributed parties. 

Figure \ref{fig:framework} illustrates the relationship and interaction between these parties. The non-cloud parties: the data owner, the data collectors/contributors, and the authorized users are trusted. The data owner controls all rights to the data, distributes public keys, and asks the data collectors/contributors to upload the collected data which are encrypted by a public key. The data owner can process the pooled data by him/herself or authorize other users to use the encrypted data. Each data collector may contribute a small part of the data, e.g., a row of the matrix. Practical examples may include the interactions with other users or the recommendations on items during a period. The data owner or the authorized users can interact with the cloud to conduct matrix analysis tasks. Note that encrypted data will have a size much larger than the original one. The authorized users cannot afford to download the encrypted big matrix and conduct computations locally. Instead, they want to fully utilize the benefits of cloud computing and minimize the client-side cost. 
\begin{figure}[tbh]
\centering
\begin{minipage}{\linewidth}
\centering
\includegraphics[width=\linewidth]{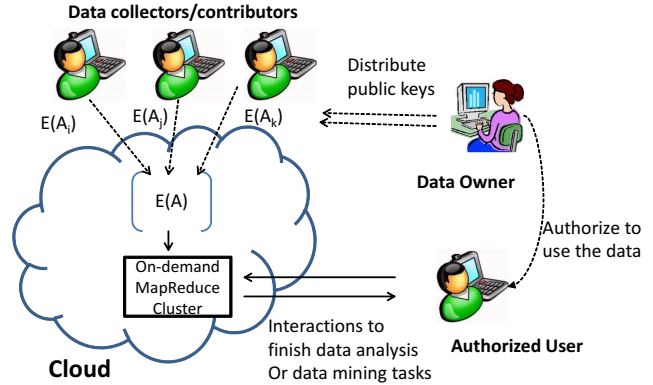}
\caption{Framework for conducting matrix mining with the cloud.}\label{fig:framework}
\end{minipage}
\end{figure}

\subsection{Threat Model}

\textbf{Assumptions.} Our security analysis is built on the important features of the discussed architecture. Under this setting, we believe the following assumptions are appropriate. 
\begin{itemize}
\item The cloud provider is not trustable and may silently observe the data and the computation to find useful information, e.g., the eigenvectors.
\item Only the data owner and the authorized users can use the proprietary matrix. The data owner authorizes some trusted users to use the data. They will not intentionally breach the confidentiality. We consider insider attacks to be orthogonal to our research.
\item The client-side system and the communication channels are properly secured and no part of the confidential matrix or the computation results can be leaked. 
\item Adversaries can see the secured matrix and the submitted plaintext vectors for secure matrix-vector computation, but nothing else.
\end{itemize}
These assumptions can be maintained and reinforced by applying appropriate security policies. 
 
\textbf{Protected Assets.} Data confidentiality is the central issue in our approach. While the integrity of data and computation is also an important issue, it is orthogonal to our study. Due to the space limitation, we will not cover data and computation integrity techniques \cite{wang11}, which are typically used to prevent adversaries from actively tampering with the data or computation. Therefore we can assume the ``honest but curious'' adversary model. 

\textbf{Attacker's goal.} The attacker is interested in recovering (or estimating) the matrix elements and the computing result, i.e., the eigenvectors and eigenvalues. 

\textbf{Security Definition.} The protected matrix is indistinguishable to chosen-plaintext attack (IND-CPA). The submitted plaintext vectors in the computation are no different from random samples uniformly drawn from a sufficiently large vector space (with $O(2^K)$ elements in the set, if $K$ is the number of bits for encoding). The protocol will not reveal any additional information.  

\subsection{Data Collection}\label{sec:data}
\textbf{Representation of Data.} In order to use the Paillier encryption system all data need to be converted to \emph{non-negative big integers} \cite{paillier99}. However, in practice, we process matrices in the real value domain. A valid method to preserve the desired precision, e.g., $d$ decimal places, is to multiply the original value by $10^d$ to scale up the values and drop the remaining decimal places. In addition, we also need to shift the values so that the domain is positive. This process is clearly reversible so that the results can be correctly recovered. With a typical key length of 1024, we have enough digits to preserve the precision. 



\textbf{Submitting data.} To prepare for collecting data, the data owner will generate one $n$-dimensional random vector $b_0$, $b_0\in \mathbb{Z}_q^n$. $b_0$ is then encrypted with the Paillier public key:\\ $E(b_0)=(E(b_{01})\ldots E(b_{0n}))$, which is then distributed to the data collectors. 

The data collectors submit their row(s) of the matrix, denoted as $A_i$ for the collector $i$, in the encrypted form to the cloud storage. In addition, they also calculate the result of $E(A_ib_0)$ with the following homomorphic method and submit it to the data owner. Assume $a$ is one row of $A_i$
\begin{equation} \label{eq:matrix-vector1}
E(ab_0) = \sum_{k=1}^n E(a_k b_{0k})
= \sum_{k=1}^n (E(b_{0k}))^{a_k},
\end{equation}
Note that the number of elements in $E(A_ib_0)$ is the same as the number of rows to be submitted to the cloud by the collector, which is typically one. Finally, the data owner collects all $E(A_ib_0)$ and decrypts them to find $Ab_0$.

\subsection{Secure Power Iteration Protocol}

To protect the plaintext vectors submitted to the cloud in power iteration, the authorized data user must perform a few steps to prepare for the perturbation approach. Then, the client side collaborates with the cloud side to finish the secure matrix-vector multiplication in the iterations. 

\textbf{Preparing the Perturbation Pool.} The authorized data user will receive $E(b_0)$, $E(Ab_0)$, and the decryption key from the data owner, and then select $m$ $n$-dimensional random vectors, where $m$ is small, say $m=5$, and send them to the cloud. These random vectors will be used to perturb and protect the vectors $\{b_i\}$ in each iteration. Let's denote them as the seed random vectors $\{s_i\}$, for i=1..m, $s_i \in \mathbb{Z}_q^n$. 

For each random vector $s_i$, a secure computation of $As_i$ is performed in the cloud as follows. With the homomorphic properties of Pailier encryption, for the $j$-th element of the resultant vector $(As_i)_j$, we have
\begin{equation} \label{eq:matrix-vector2}
E((As_i)_j) = E(\sum_{k=1}^n A_{jk}s_{ik}) = \sum_{k=1}^n (E(A_{jk}))^{s_{ik}},
\end{equation}
where $s_{ik}$ is the $k$-th element of the vector $s_i$ and $A_{jk}$ is the element $(j, k)$ of the matrix $A$. 
Note this secure matrix-vector computation is also used later in each iteration. The result $E(As_i)$ is sent back to the client side and decrypted for later processing. 
After the preparation stage, the authorized user holds the random vectors $S =\{s_i\}$ and the resultant vectors $A_S=\{As_i\}$, for i=1..m.

\textbf{Iteration}. The iteration stage starts with the random vector $b_0$, then applies the formula $b_{k+1} = Ab_k/||Ab_k||$ and other low-cost steps as described in the Arnoldi and Lanczos methods. It is important to protect $b_i$ in each iteration - otherwise, the eigenvectors are revealed. 
We apply the following method to protect the privacy of computation, the security of which will be analyzed in detail later.

To calculate $E(Ab_i)$ from $E(A)$ and $b_i$ with the Paillier homomorphic operations, $b_i$ cannot be encrypted. We design a perturbation method to protect $b_i$ before sending it to the cloud. The basic idea is to use a random vector $r_i$ and send 
\begin{equation} \label{eq:hidden-vector}
\bar{b}_i=b_i+r_i \bmod q
\end{equation}
to the cloud instead, where $q$ is a big random prime number so that $q$ is large enough to contain all the values in the application domain. We design $r_i$ with the seed random vectors generated during the preparation stage:
\begin{equation}
r_i = \sum_{l=1}^{m} \alpha_{il} s_l + \sum_{j=0}^{i-1}\beta_{ij} b_j\bmod q,
\end{equation}
for $i=1..k$, where $\alpha_{il}$ and $\beta_{ij}$ are randomly drawn from $\mathbb{Z}_q$. The purpose of including $b_j, j=0..i-1$ in perturbation is to provide better security, which will be discussed later. As the results, $\{As_k\}$ and $\{Ab_j, j<i\}$, have been computed in the preparation stage and the previous steps, $Ar_i$ can be conveniently calculated by
\begin{equation}
Ar_i = \sum_{k=1}^{m} \alpha_{ik} As_k + \sum_{j=0}^{i-1}\beta_{jk}Ab_{j_b} \bmod q, 
\end{equation}
with only the client-side vector operations (a $O(n)$ cost). 

Then, the cloud side will take $\bar{b}_i= b_i+r_i \bmod q$, apply formula \ref{eq:matrix-vector2} to calculate $E(A\bar{b}_i)$, and send the result back. The client side decrypts $E(A\bar{b}_i)$ to get $A\bar{b}_i$. With $Ar_i$ known, we have $Ab_i = A\bar{b}_i - Ar_i  \bmod q $. Once we have $Ab_i$, it is easy to calculate $b_{i+1} = Ab_i/||Ab_i||$ for the next iteration. Additional operations (as described in the Arnoldi and Lanczos methods) have a cost of $O(kn)$, which can be conveniently performed with only client-side operations.

\subsection{Cloud-side MapReduce Computation}

Data encrypted with Paillier encryption are significantly larger than the unencrypted values. With a 1024-bit key, a 64-bit double-type original value becomes a 2048-bit encrypted one, a 32-time increase. This cost cannot be avoided by using any encryption schemes based on the assumption of Diffie-Hellman or large-integer factorization \cite{katz07}. This literally turns a common-size problem to a ``big data'' problem, which requires us to exploit the parallel processing power in the cloud.   

With this problem in mind, we designed the MapReduce version of homomorphic matrix-vector multiplication. The client passes the perturbed vector $\bar{b}_i$ as a parameter to the MapReduce program and the cloud computes and returns $A\bar{b}_i$. 
Below we describe the MapReduce formulation of the cloud-side computation of $E(A\bar{b}_i)$.

\newcommand{\map}{\ensuremath{\mbox{\bf map}}}
\newcommand{\reduce}{\ensuremath{\mbox{\bf reduce}}}
\newcommand{\partition}{\ensuremath{\mbox{\bf partition}}}
\begin{algorithm}[htb]

\caption{The MapReduce Matrix-Vector Multiplication program on encrypted matrix}
\begin{algorithmic}[1]
\STATE \map$(E(\tilde{A}), \bar{b}_i)$
\STATE $E(\tilde{A})$: some rows of the encrypted matrix $A$ that are distributed to the specific Map; $\bar{b}_i$: the perturbed vector sent by the client.
\FOR{each row of $E(\tilde{A})$: $E(A_j)$}
\STATE Emit($\langle j, \sum_{k=1}^d (E(A_{jk}))^{\bar{b}_{ik}} \rangle$)
\ENDFOR
\end{algorithmic}
\medskip
\begin{algorithmic}[1]
\STATE \partition$(j, nr)$
\STATE $j$: the row number; $nr$: the total number of reduces.
\STATE return $\lfloor j/nr \rfloor$;
\end{algorithmic}
\medskip
\begin{algorithmic}[1]
\STATE \reduce($\langle j, v \rangle$)
\STATE $j$: the row number; $v$: the result.
\STATE Emit$(\langle j, v\rangle )$;
\end{algorithmic}
\end{algorithm}

The MapReduce program is rather straightforward. The Map function applies the secure matrix-vector multiplication formula (Eq. \ref{eq:matrix-vector2}), and emits the results indexed by the row number. The Map outputs are partitioned and sorted by row number and sent to the corresponding identity Reducer which writes the data segment to disk. Because we used the binary representation for the encrypted elements of the matrix,  we also designed special input/output format classes to handle the binary data.

\subsection{Security Analysis}
The proposed algorithm consists of three components (1) data collection, (2) the random perturbation step in the client side, and (3) the matrix-vector multiplication based on data encrypted with the Paillier encryption scheme. Component (1) and (3) are secure as long as the Paillier encryption is secure. 
Thus, the security of the approach only depends on that of component (2). 

In perturbation preparation, the curious cloud provider is able to collect the initial random seed vectors $S=(s_1,\ldots,s_m)$ in the perturbation pool and the progressively generated perturbed vectors $\{\bar{b}_i\}$. In addition, the adversary could be aware that $b_i$ may converge to the dominant eigenvector corresponding to the largest eigenvalue in $k$ iterations \cite{saad11}.

\textbf{Security of $\{b_i\}$ in a single run.} The first problem is whether the adversary can gain additional information by observing the known vectors $\{\bar{b}_i\}$ and $S$. We want to show that: 
\begin{prop}
The known $\{\bar{b}_i\}$ and $S$ do not reveal any information about $\{b_i\}$.   
\end{prop}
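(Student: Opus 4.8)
The plan is to establish the statement as an information-theoretic secrecy claim: I want to show that the posterior distribution of the sequence $\{b_i\}$ conditioned on the observables $\{\bar{b}_i\}$ and $S$ is identical to its prior, equivalently that $\{\bar{b}_i\}, S$ is statistically independent of $\{b_i\}$. The natural vehicle is a one-time-pad style argument adapted to the ring $\mathbb{Z}_q$. Since $q$ is prime, $\mathbb{Z}_q$ is a field, so for any nonzero $s_l$ the map $\alpha \mapsto \alpha s_l \bmod q$ is a bijection coordinatewise and, more importantly, addition of the mask $r_i$ is a bijection of $\mathbb{Z}_q^n$. The goal is to argue that $r_i$ behaves like a fresh uniform pad, so that $\bar{b}_i = b_i + r_i \bmod q$ discloses nothing about $b_i$.

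First I would fix the prior and condition on $S$, treating the seed vectors as public and the coefficients $\alpha_{il}, \beta_{ij}$ as independent and uniform over $\mathbb{Z}_q$. The second step is to analyze the linear map $(\alpha_{i\cdot}, \beta_{i\cdot}) \mapsto r_i = \sum_{l=1}^{m}\alpha_{il}s_l + \sum_{j=0}^{i-1}\beta_{ij}b_j \bmod q$ and show that, because the coefficients are uniform over the field, $r_i$ is uniformly distributed over the subspace $V_i = \text{span}(\{s_l\} \cup \{b_j : j<i\})$. The third step combines this with bijectivity of addition: the component of $b_i$ lying in $V_i$ is perfectly masked, so an adversary observing $\bar{b}_i$ cannot distinguish the true $b_i$ from any competing candidate differing by an element of $V_i$, each such candidate being equally consistent with the observation.

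The hard part, and where the naive argument breaks, is that $V_i$ has dimension at most $m + i \ll n$, so a single mask cannot cover all of $\mathbb{Z}_q^n$ and strict per-vector one-time-pad secrecy is unavailable. To close this gap I would exploit that the $b_j$ terms injected into $r_i$ are themselves unknown to the adversary, effectively enlarging the masked directions; but this is precisely the delicate point, because those same $b_j$ are among the quantities being protected, so the masking randomness is correlated with the secret. I therefore expect the real work to be a joint, inductive treatment over all iterations at once: assuming $\{b_j : j<i\}$ are already hidden within $\{\bar{b}_j\}$, show that augmenting $r_i$ with random multiples of these hidden vectors keeps $b_i$ hidden as well, and verify that the cross-iteration correlations introduced by reusing $\{s_l\}$ and $\{b_j\}$ do not assemble into a solvable linear system. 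Proving that this coupled system stays underdetermined, rather than merely that each $\bar{b}_i$ individually looks random, is the central obstacle I anticipate.
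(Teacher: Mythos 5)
Your proposal does not reach a proof, and the gap is the one you yourself flag at the end. You set out to show statistical independence of $\{b_i\}$ from $(\{\bar b_i\}, S)$ via a one-time-pad argument over $\mathbb{Z}_q$, but then correctly observe that the mask $r_i$ is confined to $V_i=\mathrm{span}(\{s_l\}\cup\{b_j: j<i\})$, a subspace of dimension at most $m+i\ll n$, so per-vector perfect secrecy is unavailable; conditioned on the actual secrets, $\bar b_i$ pins down the coset of $b_i$ modulo a low-dimensional subspace whose only unknown directions are the $b_j$ themselves. That is not a technical wrinkle to be handled by a ``joint, inductive treatment'' later --- it is a genuine obstruction showing that the information-theoretic form of the proposition (posterior equals prior over $\mathbb{Z}_q^n$) is too strong to be provable by this route, and quite possibly false. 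Deferring ``the central obstacle'' is deferring the entire content of the claim, so as written the attempt establishes nothing beyond the (easy) fact that the component of $b_i$ inside $V_i$ is masked.

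The paper sidesteps this by proving a weaker, computational statement rather than an information-theoretic one. It writes $r_1 = Sa_1 + e_1 \bmod q$ with $e_1=\beta_{1,0}b_0$, treats $a_1$ as a secret and $e_1$ as an error term, and asserts that distinguishing $\langle S, r_1\rangle$ from uniform over $\mathbb{Z}_q^{n\times m}\times\mathbb{Z}_q^n$ is an instance of decision-LWE (Regev), so $r_1$ --- and hence $\bar b_1 = b_1+r_1$ --- is computationally indistinguishable from uniform; the cases $i>1$ are asserted to follow with more unknowns. So the two approaches differ in kind: yours aims at perfect secrecy and founders on the low rank of the mask; the paper's buys a plausible conclusion by retreating to indistinguishability under a hardness assumption. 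If you want to salvage your write-up, you should (i) restate the target as computational indistinguishability, and (ii) either adopt the LWE reduction or supply the missing argument that the adversary's linear system across iterations remains information-theoretically underdetermined --- the latter is exactly what you have not done, and your own analysis suggests it cannot be done in the strict sense.
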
 
\begin{proof}First, we prove the $\bar{b}_1$ case. Other cases are similar. 
Let $a_i =(\alpha_{i1},\ldots,\alpha_{ik})$. Recall that Equation \ref{eq:hidden-vector} is $\bar{b}_i =b_i+r_i= b_i + Sa_i + \sum_{j=0}^{i-1}\beta_{ij} b_j\bmod q$ for $i=1..k$. Thus, $\bar{b}_1= b_1+r_1 = b_1 + Sa_1 + \beta_{1,0}b_0 \bmod q$. With the known $\bar{b}_1$ and unknown $r_1$, it is clear that if the adversary can guess $b_1$ from any uniformly random sample drawn from $\mathbb{Z}_q^n$ with non-negligible advantage, then she/he can also distinguish $r_1$ from random vectors. 

Let $r_1$ be represented as $r_1=Sa_1 + e_1 \bmod q$, where $e_1=\beta_{1,0}b_0$ and $a_1$ is secret. If $S$, $\beta_{1,0}$ and $b_0$ are drawn uniformly at random, the problem of distinguishing $<S, r_1>$ from uniformly random samples over $\mathbb{Z}_q^{n\times m} \times  \mathbb{Z}_q^n$ is exactly the decision version of the \emph{Learning with Errors} (LWE) problem discussed by Regev \cite{regev05}. It is already known that such $r_1$ cannot be distinguished from uniformly random samples if $e_1$ is randomly drawn and $a_1$ are secret \cite{regev05}. Therefore, $b_1$ cannot be distinguished from uniformly random samples as well. The same conclusion can be extended to the cases of $i>1$ with more unknowns included. We skip the details here.

Because $r_i$ cannot be distinguished from uniformly random samples, regardless of how $b_i$ appears (as $b_i$ will look similar with sufficiently large $i$), $\{\bar{b}_i\}$ cannot be distinguished from any set of random vectors. Thus, the series $\{\bar{b}_i\}$ does not help the adversary gain any additional information about $\{b_i\}$.  
\end{proof}

\textbf{Statistical Inference Attack.} The curious cloud provider may look at multiple runs of eigendecomposition conducted by different users. As all users start with the same $b_0$, will the multiple runs provide an opportunity for statistical inference? Below we describe an inference attack that looks at the statistical property of the series and analyze the risk under this attack.

Again, we start with the simplest case $\bar{b}_1=b_1+r_1$. The statistical inference attack treats $\bar{b}_1$ and $r_1$ as random variables and tries to estimate $E[\bar{b}_1]$ and $var(\bar{b}_1)$ with random samples. Theoretically, $E[\bar{b}_1] = b_1 + E[r_1]$ and $var(\bar{b}_1)=var(r_1)$. Correspondingly, the estimate $\hat{b}_1 = E[\bar{b}_1] - E[r_1]$. In practice, this estimation has to depend on  $N$ samples of $\bar{b}_1$: $\{\bar{b}_1^{(i)}, i=1..N\}$ observed by the attacker. We want to show that:
\begin{prop}
The proposed random perturbation method is computationally secure to the statistical inference attack.
\end{prop}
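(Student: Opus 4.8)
The plan is to grant the adversary the strongest possible statistical estimator and show that even then the number of samples needed to recover $b_1$ is exponential in the key length. First I would fix the quantity under attack: since every run uses the same $b_0$ and the same matrix $A$, the target $b_1 = Ab_0/\|Ab_0\|$ is a fixed (non-random) vector, while across the $N$ observed runs the only randomness in $\bar{b}_1^{(i)} = b_1 + r_1^{(i)} \bmod q$ comes from the fresh perturbation $r_1^{(i)} = S a_1^{(i)} + \beta_{1,0}^{(i)} b_0 \bmod q$. I would let the attacker form the sample mean $\frac{1}{N}\sum_i \bar{b}_1^{(i)}$ and, assuming she knows $S$, $b_0$, and the distribution of the coefficients, subtract the exactly computable $E[r_1]$ to obtain the idealized unbiased estimator $\hat{b}_1 = \frac{1}{N}\sum_i \bar{b}_1^{(i)} - E[r_1]$, which satisfies $E[\hat{b}_1] = b_1$. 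Proving security against this best-case estimator then implies security against any weaker statistical attack.

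The heart of the argument is a lower bound on $var(r_1)$, which controls the estimator's error. Coordinate-wise, $(r_1)_j = \sum_{l=1}^m \alpha_{1l}(s_l)_j + \beta_{1,0}(b_0)_j \bmod q$ is a linear combination of independent coefficients drawn uniformly from $\mathbb{Z}_q$, so (before reduction) each coordinate inherits variance $\frac{q^2-1}{12}\big(\sum_l (s_l)_j^2 + (b_0)_j^2\big) = \Theta(q^2)$, provided the seed vectors have generic nonzero entries. Since $var(\bar{b}_1) = var(r_1)$ and the $N$ runs are independent, the per-coordinate variance of $\hat{b}_1$ is $\Theta(q^2)/N$, i.e.\ the standard error is $\Theta(q/\sqrt{N})$. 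By Chebyshev's inequality, to pin down the integer-encoded entries of $b_1$ to within a useful resolution $\delta = O(1)$ with non-negligible probability, the attacker must drive this error below $\delta$, which forces $N = \Omega(q^2/\delta^2)$.

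I would then convert this sample-complexity bound into the computational-security statement. Because $q$ is chosen on the order of $2^{K}$ (with $K$ the encoding/key length), $q^2$ is exponential in the security parameter, whereas the number of independent runs an adversary can realistically observe is bounded by a polynomial in $K$. Hence the sample mean cannot concentrate tightly enough to separate the signal $b_1$, whose entries live in the polynomially bounded data domain, from the $\Theta(q)$-scale perturbation noise; the signal-to-noise ratio per sample is $O(1/q)$ and averaging it down is infeasible. This establishes that the inference attack succeeds only with negligible advantage.

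The main obstacle I anticipate is making the variance lower bound fully rigorous rather than heuristic. Two points need care. First, the reduction mod $q$: strictly, $\bar{b}_1$ is (close to) uniform on $\mathbb{Z}_q^n$, so the ``unbiased estimator'' narrative presumes the attacker ignores wraparound; I would handle this by noting that the wraparound-free idealization is the most favorable case for the attacker, and reduction can only destroy more information, so bounding the idealized estimator suffices. Second, I must rule out degeneracies that could collapse $var(r_1)$ --- e.g.\ seed vectors with many zero coordinates or coefficients with small effective support --- by invoking the assumption that the $s_l$ and the coefficients $\alpha_{1l}, \beta_{1,0}$ are drawn uniformly from the full space $\mathbb{Z}_q$, which guarantees the $\Theta(q^2)$ per-coordinate variance with overwhelming probability. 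A minor additional check is that whether the seed vectors $S$ are shared or distinct across users only helps us: distinct $S$ further decorrelates the samples and can only increase the variance faced by the attacker.
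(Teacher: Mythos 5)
Your proposal follows essentially the same route as the paper: both reduce the attack to the mean-subtraction estimator $\hat{b}_1 = \frac{1}{N}\sum_i \bar{b}_1^{(i)} - E[r_1]$, lower-bound the per-coordinate variance of $r_1$ by $\Theta(q^2)$, and conclude that $N = \Theta(q^2) \approx 2^{128}$ samples are needed, which is infeasible. The only difference is cosmetic: the paper obtains the $q^2/12$ variance by appealing to the indistinguishability-from-uniform result of the preceding proposition, whereas you derive it directly from the linear-combination structure of $r_1$ and add the explicit Chebyshev/polynomial-adversary framing that the paper leaves implicit.
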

\begin{proof}
We use the $\bar{b}_1$ case to prove the statement. Let's analyze the statistical property of $r_1$ to see the effectiveness of this attack. 
As shown in the earlier analysis, $r_1$ cannot be distinguished from a uniformly random sample in the domain $\mathbb{Z}_q^n$. In the integer domain $\mathbb{Z}_q$, a uniformly random variable $v$ has a mean value $E[v] =q/2$ and variance $var(v)=q^2/12$.
Thus, we can assume it is from the same distribution and each element has the properties $E[r_{1i}] = q/2$ and $var(r_{1i})=q^2/12$, for $i=1..N$. The estimate $\hat{b}_1$ of $b_1$ can be established as $\hat{b}_1 = 1/N\sum_{i=1}^N \bar{b}_1^{(i)} - E[r_1]$. However, the accuracy of this estimate is determined by the variance of $\bar{b}_1^{(i)}$, which has the same variance as $var(r_1)$. With  $N$ independent samples the variance of $\hat{b}_1$ is
\[
 var(\hat{b}_1) = var(\frac{1}{N} \sum_{i=1}^N \bar{b}_1^{(i)}) = \frac{1}{N} var(\bar{b}_1)=\frac{1}{N}var(r_1).
 \]
To gain sufficiently small variance for the estimation, the number of samples $N$ would be as large as $\Theta(q^2)$. In the case of using long integers, $q^2$ will be in the range of $2^{128}$, which can be further increased if a longer representation is used. It is impractical that the authorized users will conduct this number of computations. Therefore, the statistical inference attack is ineffective. 
\end{proof}

\subsection{Cost Analysis}
The cost of the approach can be analyzed according to the three parties we described. 

\textbf{Data Collector} needs to encrypt one vector, which costs $O(n)$ value encryptions, conduct one homomorphic dot product, which has a similar cost to vector encryption (see Section \ref{sec:exp}), and transmit the encrypted vector to the cloud, which has a cost of $O(n)$. In total, its computational and transmission costs are $O(n)$.

\textbf{Authorized User} needs to prepare the perturbation pool, which costs $O(mn)$ in transmission and also $O(mn)$ decryptions, where $m$ is small. In power iterations, it needs $O(kn)$ decryptions in total for getting the $k$ eigenvectors, and $O(kn)$ space for storing the encrypted values. As $k$ and $m$ are small, the client side costs are small - in general, a PC can handle such a workload. This is an important feature for users to fully enjoy the benefits of cloud computing.

\textbf{Cloud Side} has a $O(n^2)$ storage cost for encrypted matrix elements, as well as computation costs on MapReduce-based secure matrix-vector multiplication. Because the major cost is on the Map phase, with $p$ Map slots in the Hadoop cluster, the total cost is about $O(n^2/p)$. The storage cost is proportional to the number of encrypted values, and related to the key size. 

We will conduct extensive experiments to carefully evaluate these costs.

\section{Experiments}\label{sec:exp}
 
To show the effectiveness of the proposed research, we conducted a set of experiments to evaluate the efficiency of processing in the three involved parties: the data collectors, the authorized users, and the cloud.

\subsection{Setup}
The client machine is configured with 128 GB of RAM and four quad-core AMD processors. The cloud-side MapReduce program was tested using the Hadoop cluster at Wright State University. The cluster is configured with 16 slave nodes running Apache Hadoop version 1.0.3. Each slave node is configured with 16 GB of RAM, four quad-core AMD processors, 16 map slots, 12 reduce slots, and a 64MB HDFS block size.  The cloud-side MapReduce program was implemented with Java, and the client-side programs with C++ and the GMP library (gmplib.org).

We use a 1024-bit key in our experiments. Paillier encryption with key sizes less than 1024 is considered not secure \cite{lenstra01}. 
The experiments are conducted with simulated matrices. Because the studied problem is fundamental and general to all applications it is sufficient to use simulated data. The original matrices use double values as the elements (8 bytes per value). As discussed in section \ref{sec:data}, these matrices are converted to long integers for encryption to preserve sufficient precision. The encrypted matrices are used as input to the MapReduce jobs in the power iteration algorithm. 

\subsection{Data Collector Costs}
Each data collector in the framework will generate one vector (or a few), encrypt them, and deliver them to the cloud. In addition, it will conduct the secured vector dot-products for generating $E(A_ib_0)$. Thus, the data collector's major costs are on vector encryption, secure vector dot-products, and transmission. These costs are determined by the basic cost of Paillier encryption (i.e., the key size) and the number of dimensions.

\begin{table}[tbh]
\centering
\scriptsize
\begin{tabular}{|c|c|c|c|c|}
\hline
\multirow{2}{*}{Encoding} & Encrypt & Dot Product & Size & Compressed \\
& (sec) & (ms) & (bytes) & (bytes) \\
\hline
Text& 56 & 3.7 & \textbf{\emph{6.1M}} & \textbf{\emph{3.5M}} \\
Binary& 56 & 3.7 & 2.5M & 2.5M\\
\hline
\end{tabular}
\caption{Data Collector costs for a 10,000-dimension vector.} \label{tab:collector-costs}
\vspace{-0.5cm}
\normalsize
\end{table}
 
We use the binary encoding scheme to minimize the size of encrypted data, which also minimizes communication and cloud storage. Basically, the size of an encrypted value in binary representation has twice the size of the encryption key, e.g. a 64-bit double-type value will become 2048-bit encrypted value with a 1024-bit key. We show a simple comparison using a 1024-bit key for a 10,000-dimension vector in Table \ref{tab:collector-costs}. 

Compared to the binary representation, a naive text representation of big integers without compression will cost about 150\% more in space (and 40\% more with compression) but the computation time is about the same. 
The binary representation does not significantly compress due to the randomized nature of encrypted values. 

\subsection{Client-side Costs in Iterations}
In each iteration of the proposed algorithm, the client side will receive the encrypted vector $E(A\bar{b}_i)$ from the cloud, decrypt it, regenerate a plaintext perturbed vector, and submit it for the next iteration. We will evaluate the communication, memory, and computation costs of the corresponding steps.  

The communication costs consist of receiving the encrypted vector from the cloud and sending the plaintext perturbed vector back to the cloud. We use the number of bytes to represent these costs. According to the previous discussion, an encrypted vector of 10,000 dimensions will cost 2.56M bytes, which needs to be transmitted to the client side. In comparison, the returned plain vector of 10,000 long integers has 80K bytes. These sizes are also linearly proportional to the number of dimensions.

The computational steps include decrypting the vector and constructing the new plaintext vector. Let the perturbation pool contain 10 randomly generated vectors. Table \ref{tab:client-comp-cost} shows the cost distribution with different dimensions. As observed, decryption takes most of the time. Since decryption can be easily done in parallel with a multicore processor, this cost can be further reduced.   

\begin{table}[tbh]
\centering
\scriptsize
\begin{tabular}{|c|c|c|c|c|}
\hline
Dimension & $E(A\bar{b}_i)$ size & Decrypt $E(A\bar{b}_i)$ & Other processing \\
\hline
10000& 2.56MB& 31s & 5ms \\
30000& 7.68MB &94s & 16ms \\
50000& 12.8MB & 149s & 26ms \\
\hline
\end{tabular}
\caption{Client computation costs for various vector dimensions.} \label{tab:client-comp-cost}
\vspace{-0.5cm}
\normalsize
\end{table}
The memory cost consists of the encrypted vector and the plain vectors in the perturbation pool. According to the algorithm, after each iteration, the resultant $b_i$ will be added to the perturbation pool. However, since the number of iterations is normally small, the pool requires only a limited amount of memory. For example, with 10,000 dimensions and a pool size of 10, 10 iterations need only about 2M bytes of memory to hold the plain long integer vectors.

\subsection{Cloud-side Costs}
The cloud side has major costs in storing the encrypted data and computing the secure matrix-vector multiplication. Note that encrypted data has significantly larger size. It is impractical for an authorized user to download and process the data locally. We show some real numbers in Table \ref{tab:cloud-storage-cost} to give a more concrete idea of the problem scale.

\begin{table}[tbh]
\scriptsize
\centering
\begin{tabular}{|c|c|c|}
\hline
Matrix dimension &Unencrypted (GB) & Encrypted (GB) \\
\hline
10000& 0.8 & 25.8\\
30000&7.2 & 232.0\\
50000&20.0& 645.0\\
\hline
\end{tabular}
\caption{Storage cost for matrices encrypted with a 1024-bit key.} \label{tab:cloud-storage-cost}
\vspace{-0.5cm}
\normalsize
\end{table}
 
Clearly, data in such scales cannot be processed using traditional methods. Instead, we have to fully exploit the parallel processing power in the cloud. The cloud side computation consists of the secure matrix-vector multiplication. The matrix is stored in the form of row vector sets and split into data blocks of 64MB in the Hadoop file system. The MapReduce framework assigns each block to a Map. This allows sets of vector-vector dot products to occur in parallel. Because the expensive operations happen in the Map phase, the number of Maps determines the overall performance. 

\begin{figure}[tbh]
\centering
\begin{minipage}{.9\linewidth}
\centering
\includegraphics[width=\linewidth]{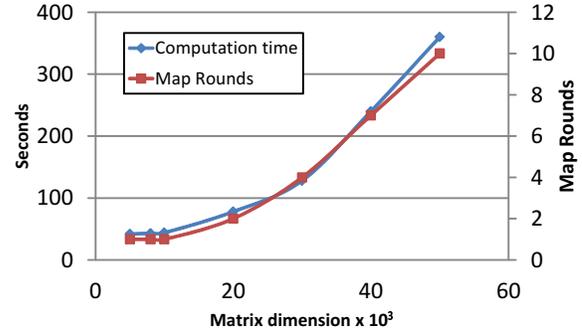}
\vspace{-0.75cm}
\caption{MapReduce costs of one encrypted matrix-vector computation.  Processing cost is approximately determined by the number of Map rounds.}\label{fig:mr-cost-combine}
\end{minipage}
\vspace{-0.5cm}
\end{figure}

Figure \ref{fig:mr-cost-combine} shows the cost increase trend with different sizes of encrypted matrices. With the number of dimensions less than 10,000, the Map phase can be done in one round with the in-house Hadoop cluster. With more dimensions, more Map rounds are needed and the total time cost is proportional to the number of Map rounds. On average each Map round costs about 30-40 seconds. This shows great scalability. As we increase the size of the cluster (with more Map slots) to maintain one Map round, the overall cost will stay constant (around 40 seconds as shown in the figure). 

\begin{figure}[tbh]
\centering
\begin{minipage}{.9\linewidth}
\centering
\includegraphics[width=\linewidth]{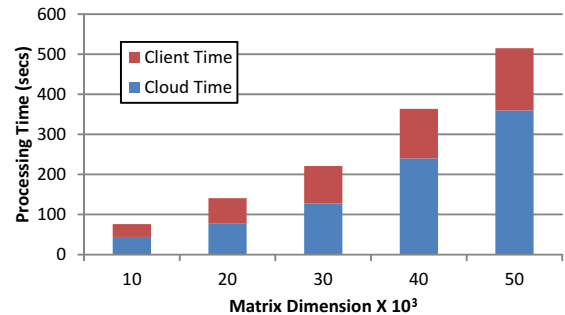}
\vspace{-0.75cm}
\caption{ A summary of the cloud-side and client-side costs shows low client-side cost and the cloud-side efficiency due to the optimized MapReduce implementation. \newline }
\label{fig:cost-summary}
\end{minipage}
\end{figure}
\vspace{-0.5cm}

The overall cost distribution over the cloud and the client side for one iteration using the in-house cluster is shown in Figure \ref{fig:cost-summary}. With increased dimensions, the cloud-side cost dominates the overall cost.

\section{Related Work}\label{sec:related}
A framework for securely outsourcing general computations is given in \cite{gennaro10}.  However, this framework is based on Gentry's fully homomorphic encryption scheme \cite{gentry09} rendering it impractical due to the high computational costs and ciphertext sizes.  A simple test with the Scarab FHE library (hcrypt.com/scarab-library)  yielded ciphertext sizes more than ten times those generated using Paillier \cite{paillier99}.  Very recent work by Naehrig et al. \cite{naehrig11} propose a secure outsourcing solution for problems which only require the encryption scheme to be ``somewhat'' homomorphic (SHE).  They use the SHE scheme of \cite{bv11a} which provides reasonably efficient computational performance but still suffers from large ciphertexts.

Atallah et al. \cite{atallah10} use a more directed approach and present secure outsourcing solutions that are specific to large scale systems of linear equations and matrix multiplication applications.  These solutions fall short in that they leak private information, depend on multiple non-colluding servers, and require a large communication overhead, respectively.  Wang et al. \cite{wang11} use an iterative approach for solving linear equations via client-cloud collaboration.  However, their approach has several  weaknesses.  First, their approach requires that the entire unencrypted matrix be present at the client side.  Secondly, the client side must perform a problem transformation step with a computation cost of $O(n^2)$.  These weaknesses render the approach impractical for big matrices and do not fully utilize the benefit of the cloud.  

Secure multiparty computation (SMC) solutions \cite{nissim06, cramer01, kiltz07} currently exist for various linear algebra problems.  In general, the SMC solutions do not translate well to the secure outsourcing scenario. Each of the SMC parties typically holds a share of the data in plaintext form and conducts computation with it. In our approach, the cloud holds the encrypted data. In addition, the SMC approaches normally have high communication overhead between the parties, which is not desired in cloud-based computing.


\section{Conclusion}\label{sec:conclusion}
In this paper we present an iterative processing approach for finding top-k eigenvectors from encrypted data in the cloud. The security of this approach is implemented with the Paillier encryption system and an efficient random vector perturbation. It is carefully designed so that the client-side computation cost is minimized. We also develop a MapReduce program to efficiently process the cloud-side encrypted matrix-vector multiplication. The experimental results demonstrate the storage and computational costs needed to setup and use the proposed approach and show the scalability of the design. 

In future work, we will optimize our algorithm for sparse matrices and formulate techniques to detect dishonest or lazy service providers.

\bibliographystyle{IEEEtrans}
\bibliography{paper,paper_jim} 

\end{document}